\long\def\comment#1{{}}
\newtheorem{thm}{Theorem}[section]
\newtheorem{lemma}[thm]{Lemma}
\newtheorem{cor}[thm]{Corollary}
\newtheorem{rem}[thm]{Remark}
\theoremstyle{definition}
\newtheorem{ex}[thm]{Example}
\begin{document}

\title{{\bf Pairwise Comparisons Simplified }}

\author{W.W. Koczkodaj
\thanks{
Computer Science, Laurentian University,\;\;
Sudbury, Ontario P3E 2C6, Canada 
wkoczkodaj@cs.laurentian.ca 
} \\
\and
J. Szybowski
\thanks{
AGH University of Science and Technology,\; 
Faculty of Applied Mathematics,
al. Mickiewicza 30, 30-059 Krak\'ow, Poland, szybowsk@agh.edu.pl}
\thanks{State Higher Vocational School in Nowy S\k{a}cz,\;\;
Staszica 1, 33-300, Nowy S\k{a}cz, Poland
}\thanks{Research supported by the 
Polish Ministry of Science and Higher Education}
} 

\maketitle

\begin{abstract}

This study examines the notion of generators of a pairwise comparisons matrix. Such approach decreases the number of pairwise comparisons 
from $n\cdot (n-1)$ to $n-1$. An algorithm of reconstructing of the PC matrix from its set of generators is presented. \\

\noindent Keywords: {\em pairwise comparisons, generator, graph, tree, algorithm} \\

\end{abstract}

\section{Introduction}





In  \cite{Thur27}, Thurstone proposed ``The Law of Comparative Judgments'' for pairwise comparisons (for short, PC). However, the first use of  pairwise comparisons is in \cite{Fech1860}). Even earlier, Condorcet used it in \cite{Condorcet1785}, but in a more simplified way for voting (win or loss). 

The approach, based on the additive value model is presented in \cite{BVCV2012}. Generalization of the pairwise comparisons method was examined in \cite{KO1997,LY2004}.
In \cite{FLR2005}, the PC matrix approximation problem was examined.

In this study, we examine the possibility of reconstructing the entire $n \times n$ PC matrix from only $n-1$ given entries placed in strategic locations. We call them generators. Before we progress, some terminologies of pairwise comparisons  must be revisited in the next section, since PC theory is still not as popular as other mathematical theories. However, the next section is definitely not for PC method experts.

\section{Pairwise comparisons basics}
\label{sec:bc}

We define an $n \times n$ {\em pairwise comparison matrix} simply as a square matrix $M=[m_{ij}]$ such that $m_{ij}>0$ for every 
$i,j=1, \ldots ,n$. A pairwise comparison matrix $M$ is called {\em 
reciprocal} if $m_{ij} = \frac{1}{m_{ji}}$ for every $i,j=1, \ldots ,n$
(then automatically $m_{ii}=1$ for every $i=1, \ldots ,n $). Let
us assume that:

\begin{displaymath}
M = \begin{bmatrix}
1 & m_{12} & \cdots & m_{1n} \\ 
\frac{1}{m_{12}} & 1 & \cdots & m_{2n} \\ 
\vdots & \vdots & \vdots & \vdots \\ 
\frac{1}{m_{1n}} & \frac{1}{m_{2n}} & \cdots & 1
\end{bmatrix}
\end{displaymath}

\noindent where $m_{ij}$ expresses a relative quantity, intensity, or preference of entity (or stimuli) $E_i$ over $E_j$. A more compact and elegant specification of PC matrix is given in \cite{KK2013} by Kulakowski.

A pairwise comparison matrix $M$ is called {\em consistent} (or {\em transitive}) if: 
$$m_{ij} \cdot m_{jk}=m_{ik}$$ 
for every $i,j,k=1,2, \ldots ,n$. \\

We will refer to it as a ``consistency condition''. Consistent PC matrices correspond to the situation with the exact values $\mu(E_1), \ldots , \mu(E_n)$ 
for all the entities. 
In such case, the quotients $m_{ij}=\mu(E_i)/\mu(E_j)$ then form a consistent PC matrix. 
The vector $s=[\mu(E_1), \ldots \mu(E_n)]$ is unique up to a multiplicative constant. 
While every consistent matrix is reciprocal, the converse is generally false. If the consistency condition does not hold, the matrix is {\em inconsistent} (or {\em intransitive}).
Axiomatization of inconsistency indicators for pairwise comparisons has been recently proposed in \cite{KS2013a}. 

The challenge for the pairwise comparisons method comes from the lack of consistency of the pairwise comparisons matrices, which arises in practice (while as a rule, all the pairwise comparisons matrices are reciprocal). Given an $N \times N$ matrix $M$, which is not consistent, the theory attempts to provide a consistent $n \times n$ matrix $M'$, which differs from matrix $M$ ``as little as possible''. 

It is worth to noting that the matrix: $M=[ v_i/v_j]$ is consistent for all (even random) positive values $v_i$. It is an important observation since it implies that a problem of approximation is really a problem of a norm selection and the distance minimization. For the Euclidean norm, the vector of geometric means (equal to the principal eigenvector for the transitive matrix) is the one which generates it. Needless to say that only optimization methods can approximate the given matrix for the assumed norm (e.g., LSM for the Euclidean distance, as recently proposed in \cite{AZG2012}). Such type of matrices are examined in \cite{JT2011} as ``error-free'' matrices.

It is unfortunate that the singular form ``comparison'' is sometimes used considering that a minimum of three comparisons are needed for the method to have a practical meaning. Comparing two entities (stimuli or properties) in pairs is irreducible, since having one entity compared with itself gives trivially 1. Comparing only two entities ($2 \times 2$ PC matrix) does not involve inconsistency. Entities and/or their properties are often called stimuli in the PC research but are rarely used in applications.

\section{The generators of pairwise comparisons matrix }

For a given PC matrix $A \in M_{n\times n}(\mathbb R)$ consider the set $C_n:=\{a_{ij}:\ i<j\}$. Note that to reconstruct the whole matrix it is enough to know the elements of $C_n$, as $a_{ii}=1$ for each $i \in \{1,\cdots,n\}$ and $a_{ji}=\frac{1}{a_{ij}}$ for $i<j$.

Let us call each such set sufficient to reconstruct the matrix $A$ its {\em set of generators}.

The set $C_n$ has $\frac{n^2-n}{2}$ elements. However, consistency is a much stronger condition. So, it is obvious that we can reduce this input set to calculate the rest of elements. It is a natural question to ask which minimal subsets of $C_n$ generate $A$.

\begin{rem}
If $B\subset B' \subset C_n$ and $B$ generates $A$, then $B'$ does as well.
\end{rem}

\begin{thm}\label{th1}
There is no $(n-2)$-set of generators of $A$.
\end{thm}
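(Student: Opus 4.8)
The plan is to view a candidate set of generators as a graph on the vertex set $\{1,\dots,n\}$, where each generator $a_{ij}$ (with $i<j$) contributes an edge $\{i,j\}$. The key observation is that the consistency condition lets us propagate known values along paths: if $a_{ij}$ and $a_{jk}$ are both determined, then $a_{ik}=a_{ij}\cdot a_{jk}$ is determined as well. More generally, if $i$ and $j$ lie in the same connected component of the graph, then $a_{ij}$ is forced by the generators along any path joining them (and one should check this is well-defined, i.e.\ path-independent, which follows from consistency). Conversely, if $i$ and $j$ lie in different components, I claim $a_{ij}$ is \emph{not} determined: one can rescale the values $\mu(E_k)$ on one component by an arbitrary positive constant $\lambda$ without affecting any generator, yet this changes $a_{ij}$. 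Hence a set of generators reconstructs $A$ if and only if the associated graph is connected.

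With that equivalence in hand, the theorem is immediate from elementary graph theory: a connected graph on $n$ vertices has at least $n-1$ edges, so any set of generators has at least $n-1$ elements, and in particular there is no $(n-2)$-set of generators. So the proof would proceed in three steps. First, set up the graph correspondence and prove the propagation lemma: the value $a_{ij}$ is uniquely determined by the generators whenever $i,j$ are in the same component (using consistency and path-independence). Second, prove the converse: if $i,j$ are in distinct components, exhibit two distinct consistent matrices agreeing on all generators (via the rescaling argument), so the set fails to generate $A$. Third, conclude that a generating set corresponds to a connected spanning subgraph, hence has $\ge n-1$ edges.

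The main obstacle, such as it is, is the converse direction: one must be slightly careful that the rescaled matrix is still a legitimate PC matrix (it is—positivity and reciprocity are preserved) and that it genuinely differs from $A$ at the entry $a_{ij}$ (it does, since $\lambda\neq 1$ changes the quotient across the cut). A subtlety worth flagging is whether we are quantifying over all PC matrices $A$ or only consistent ones; since the generators of an inconsistent matrix already fail to propagate consistently, the natural and intended reading is that $A$ ranges over consistent matrices, and the argument above is for that case. One could also give a purely counting-flavored proof without the full graph machinery—note that $n-2$ edges on $n$ vertices must leave at least two components, then run only the rescaling half—but the graph-theoretic framing is cleaner and sets up the sharper result that the minimal generating sets are exactly the spanning trees, which presumably comes next.
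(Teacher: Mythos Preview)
Your argument is correct but follows a genuinely different route from the paper's. The paper proves the theorem by induction on $n$: the base case $n=3$ is immediate, and in the inductive step one shows that a hypothetical $(n-1)$-set of generators for an $(n+1)\times(n+1)$ matrix can be traded, by swapping last-column entries for the entries they determine in the top-left block, for an $(n-2)$-set of generators of the $n\times n$ submatrix, contradicting the inductive hypothesis. Your approach instead proves outright the equivalence \emph{generating set $\Leftrightarrow$ connected graph}, using the rescaling-across-a-cut construction for the nontrivial direction, and then reads off the theorem from the fact that a connected graph on $n$ vertices has at least $n-1$ edges. This buys you more: the characterization of minimal generating sets as spanning trees, which the paper proves as a separate later theorem (and in fact \emph{uses} the present result to do so), falls out immediately from your argument. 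The paper's induction, on the other hand, avoids having to exhibit a second consistent matrix explicitly and stays closer to the matrix entries themselves.
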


\begin{proof}
For $n=3$ the statement is obvious, as in any matrix: \\
\[ \left[
\begin{array}{rrr}
1 & a & c \\
\frac{1}{a} & 1 & b \\
\frac{1}{c} & \frac{1}{b} & 1
\end{array}
\right].\]
if we only know one of the values $a$, $b$ or $c$, we cannot clearly calculate the other two satisfying $c=ab$.

To continue the induction, let us assume that the assertion holds for each matrix $M \in M_{n\times n}(\mathbb R)$. Now consider the matrix:

\begin{displaymath}
A_{n+1} = \begin{bmatrix}
1 & a_{12} & \cdots & a_{1n} & a_{1,n+1} \\ 
\frac{1}{a_{12}} & 1 & \cdots & a_{2n} & a_{2,n+1} \\ 
\vdots & \vdots & \vdots & \vdots \\ 
\frac{1}{a_{1n}} & \frac{1}{a_{2n}} & \cdots & 1 & a_{n,n+1}\\
\frac{1}{a_{1,n+1}} & \frac{1}{a_{2,n+1}} & \cdots & \frac{1}{a_{n,n+1}} & 1 
\end{bmatrix}
\end{displaymath}

Notice that in order to calculate the elements of the last column, we need to know at least one of them. On the other hand, if we know $p$ of them, we can calculate only $p-1$ of the elements $a_{ij}$ for $1 \leq i<j\leq n$. Let us assume there is a $(n-1)$-set $B$ of generators of $A_{n+1}$. We define a new set $B':=B \cup L \setminus R$, where $R$ denotes the set of elements of $B$ from the last column, and $L$ denotes the elements from the previous columns which can be calculated from the elements of $R$. Now $B'$ is a $(n-2)$-set of generators of the matrix $A_n$ resulting from $A_{n+1}$ by removing the last row and column. 

$A_n \in M_{n\times n}(\mathbb R)$, which contradicts the inductive assumption.

\end{proof}

\begin{rem} \label{rem-aij}
Given input values $a_{i,i+1}$ (for $i=1,\cdots,n-1$) located above the main diagonal, from the consistency condition we can reconstruct the entire matrix 

\begin{displaymath}
A = \begin{bmatrix}
1 & a_{12} & \cdots & a_{1n} \\ 
\frac{1}{a_{12}} & 1 & \cdots & a_{2n} \\ 
\vdots & \vdots & \vdots & \vdots \\ 
\frac{1}{a_{1n}} & \frac{1}{a_{2n}} & \cdots & 1
\end{bmatrix}
\end{displaymath}
using the formula
\begin{equation} \label{aij}
a_{ij}=\prod_{k=i}^{j-1} a_{k,k+1}
\end{equation} 
for $j > i$.
\end{rem}

It is worth mentioning that the above $n-1$ values are not the only values generating the entire matrix. 
The following theory is provided for finding the other minimal sets of generators.

\begin{rem}\label{mut-rel} There is a mutual relevance between each set $B \subset C_n$ and an undirected graph $G_B$ with $n$ vertices:

\begin{center}
$a_{ij} \in B \Leftrightarrow$ there is an edge $i-j$ in $G_B$. 
\end{center}
\end{rem}

\begin{lemma}\label{lem}
If we know $k$ edges of any subtree of $G_B$, then we are able to compute $\left(\begin{array}{c}k+1\\ 2\end{array}\right)$ of elements of $C_n$.
\end{lemma}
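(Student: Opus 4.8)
The plan is to exploit the mutual relevance from Remark~\ref{mut-rel}: a subtree $T$ of $G_B$ with $k$ edges has exactly $k+1$ vertices, and I claim that knowing the $k$ edge-values $a_{ij}$ (for edges $i-j$ in $T$) determines $a_{pq}$ for every pair $\{p,q\}$ of vertices of $T$, which is precisely $\binom{k+1}{2}$ elements of $C_n$. The key observation is that a tree is connected, so for any two vertices $p,q$ of $T$ there is a unique path $p=v_0 - v_1 - \cdots - v_m = q$ in $T$, and along this path every consecutive edge $v_\ell - v_{\ell+1}$ corresponds to a known value $a_{v_\ell,v_{\ell+1}}$ (or its reciprocal, depending on orientation, but since $a_{ji}=1/a_{ij}$ this is still determined).

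The computation then proceeds by repeated use of the consistency condition: from $m_{ij}\cdot m_{jk} = m_{ik}$ one gets, telescoping along the path,
\begin{equation}\label{path-prod}
a_{pq} = \prod_{\ell=0}^{m-1} a_{v_\ell,v_{\ell+1}},
\end{equation}
which generalizes formula~(\ref{aij}) of Remark~\ref{rem-aij} (that remark is the special case where $T$ is the path $1-2-\cdots-n$). So the proof is: first, count vertices of a $k$-edge subtree ($k+1$, by the standard fact that a tree on $v$ vertices has $v-1$ edges); second, observe connectedness gives a unique path between any two of them; third, invoke consistency to get~(\ref{path-prod}); fourth, count the pairs, $\binom{k+1}{2}$.

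An alternative, cleaner phrasing is induction on $k$: a subtree with $k$ edges is obtained from one with $k-1$ edges by adding one pendant edge $u-w$ with $w$ new; by the inductive hypothesis we already know $a_{pq}$ for all $\binom{k}{2}$ pairs among the old $k$ vertices, and now for each old vertex $p$ we compute $a_{pw} = a_{pu}\cdot a_{uw}$ by consistency, giving $k$ new elements, for a total of $\binom{k}{2}+k=\binom{k+1}{2}$. I expect the main (very mild) obstacle to be purely bookkeeping: being careful that the ``edge $i-j$'' of the undirected graph $G_B$ really does pin down $a_{ij}$ as a specific number — it does, because $B$ records the value $a_{ij}$ with $i<j$ and reciprocity recovers $a_{ji}$ — and making sure the product~(\ref{path-prod}) is independent of which direction one traverses the unique path, which again follows from reciprocity. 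No genuine difficulty arises because the tree structure rules out any cycle that could impose a nontrivial compatibility constraint.
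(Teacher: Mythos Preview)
Your proposal is correct, and your inductive version is essentially the paper's own proof: the paper also inducts on the number of edges, removes a leaf to apply the hypothesis, and then restores it to pick up the additional elements via path products, with the same count $\binom{k+1}{2}+(k+1)=\binom{k+2}{2}$ (shifted by one in indexing). Your inductive step is in fact marginally slicker, since you use the already-computed $a_{pu}$ together with the single new edge $a_{uw}$, whereas the paper writes out the full path product to the new leaf each time. Your first, direct argument via the unique path and the telescoping product~(\ref{path-prod}) is a clean non-inductive alternative that the paper does not give; it has the advantage of making the role of connectedness (existence of a path) and acyclicity (no compatibility constraint) completely transparent in one stroke.
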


\begin{proof}
For $k=1$ the statement is obvious, as $\left(\begin{array}{c}1+1\\ 2\end{array}\right)=1$.

To proceed with induction, let us assume that the statement is true for $k$. Take a tree $D$ with $k+2$ vertices $x_1,\ldots,x_{k+2} \in V(D)$ and $k+1$ edges. Remove any leaf $x_l \in V(D)$ together with the edge $x_m-x_l$, joining the leaf with the tree. We get a new tree $D'$ with $k$ edges. From the inductive assumption, we are able to calculate $\left(\begin{array}{c}k+1\\ 2\end{array}\right)$ of elements of $C_n$. 

Now, when we give the removed edge back, we notice that for every vertex $x_j \in V(D)\setminus \{l\}$ there is a path $x_j-x_{p_1}- \dots - x_{p_s}-x_l$ joining $x_j$ with $x_l$ and we can compute $a_{jl}=a_{jp_1}\cdot a_{p_1p_2}\cdot \ldots \cdot a_{p_sl}$. These are $k+1$ new elements and altogether we know $\left(\begin{array}{c}k+1\\ 2\end{array}\right)+k+1=\left(\begin{array}{c}k+2\\ 2\end{array}\right)$ of elements of $C_n$. 

\end{proof}

Now we can formulate the necessary and sufficient condition for minimal sets of generators.

\begin{thm} \label{gen-tree} Let us assume $B \subset C_n$ is a $(n-1)$-set. Then
\begin{center}
$B$ generates matrix $A$ $\Leftrightarrow$ $G(B)$ is a tree.
\end{center}
\end{thm}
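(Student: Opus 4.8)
The plan is to prove both implications separately, using the graph correspondence from Remark~\ref{mut-rel} and the counting Lemma~\ref{lem}.

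\medskip

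\noindent\textbf{($\Leftarrow$) If $G(B)$ is a tree, then $B$ generates $A$.}
A tree on $n$ vertices has exactly $n-1$ edges, so this is consistent with $|B|=n-1$. I would apply Lemma~\ref{lem} directly with $k=n-1$: knowing all $n-1$ edges of the spanning tree $G(B)$ allows us to compute $\binom{n}{2}=\frac{n^2-n}{2}$ elements of $C_n$. Since $|C_n|=\frac{n^2-n}{2}$, this means every element of $C_n$ is determined, and by the observation opening Section~3 the whole matrix $A$ is reconstructed. So $B$ is a set of generators.

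\medskip

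\noindent\textbf{($\Rightarrow$) If $B$ generates $A$, then $G(B)$ is a tree.}
Here I argue by contraposition: suppose $G(B)$ is \emph{not} a tree. Since $|E(G(B))|=|B|=n-1$ and $G(B)$ has $n$ vertices, a graph with $n-1$ edges that is not a tree must be disconnected (a connected graph on $n$ vertices with $n-1$ edges is necessarily a tree). So $G(B)$ splits into at least two connected components, say with vertex sets partitioning $\{1,\dots,n\}$ into nonempty blocks $V_1,\dots,V_r$ with $r\ge 2$. I would then show that $B$ cannot generate $A$ by exhibiting two distinct consistent matrices agreeing with $B$: pick vertices $i\in V_1$ and $j\in V_2$ in different components; the value $a_{ij}$ is not pinned down. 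Concretely, take any consistent matrix $A$ compatible with $B$ coming from a weight vector $s=[\mu(E_1),\dots,\mu(E_n)]$, and scale all weights $\mu(E_t)$ for $t\in V_1$ by an arbitrary factor $\lambda>0$ (leaving the other blocks fixed). This changes $a_{ij}=\mu(E_i)/\mu(E_j)$ for $i\in V_1$, $j\notin V_1$, but leaves unchanged every $a_{i'j'}$ with $i',j'$ in the same component — in particular every entry of $B$ — because within a component the ratio structure is untouched. Hence infinitely many consistent matrices share the same restriction to $B$, so $B$ does not generate $A$.

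\medskip

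\noindent The main obstacle is the ($\Rightarrow$) direction, specifically making rigorous that entries indexed within one component are unaffected by the rescaling while a cross-component entry genuinely changes. One must check that the edges of $B$ all lie inside components (true by definition of $G(B)$), and that the rescaled matrix is still a legitimate consistent PC matrix (immediate, since it is $[\mu'(E_i)/\mu'(E_j)]$ for the new positive weight vector $\mu'$). A cleaner alternative, avoiding weight vectors, is a direct dimension/counting argument: by Lemma~\ref{lem}, if $G(B)$ has components of sizes $n_1,\dots,n_r$ then from $B$ one can compute at most $\sum_{t=1}^r \binom{n_t}{2}$ elements of $C_n$ (applying the lemma within each component, since any path realizing a product $a_{pq}$ must stay inside a single component); when $r\ge 2$ this is strictly less than $\binom{n}{2}$, so some element of $C_n$ is undetermined and $B$ fails to generate $A$. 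I would likely present this counting version as the primary argument since it reuses Lemma~\ref{lem} directly and parallels the $(\Leftarrow)$ direction.
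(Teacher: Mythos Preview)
Your ($\Leftarrow$) direction is identical to the paper's: both simply invoke Lemma~\ref{lem} with $k=n-1$.

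For ($\Rightarrow$) you take a genuinely different route. The paper uses the \emph{cycle} characterization: a non-tree on $n$ vertices with $n-1$ edges must contain a cycle $x_{k_1}-\cdots-x_{k_s}-x_{k_1}$; deleting one edge of that cycle leaves an $(n-2)$-set that still generates $A$ (the deleted entry is the product along the remaining cycle edges), contradicting Theorem~\ref{th1}. You use the dual \emph{disconnectedness} characterization and argue directly that cross-component entries are undetermined, either by rescaling the weight vector on one block or by the counting bound $\sum_t \binom{n_t}{2}<\binom{n}{2}$. Your argument is self-contained and constructive---it exhibits the free parameter explicitly and does not rely on Theorem~\ref{th1}---while the paper's argument is shorter but leans on that earlier inductive result. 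Both are correct; the rescaling version is the cleaner of your two variants, since the counting variant tacitly needs the observation that consistency relations cannot propagate information across components (which is exactly what your rescaling construction proves).
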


\begin{proof}
Take a set $B$ of generators of matrix $A$. Let us assume $G(B)$ is not a tree. If so, it must contain a cycle $x_{k_1}-x_{k_2}-\ldots - x_{k_s}- x_{k_1}$. When we remove the edge $x_{k_1}-x_{k_s}$ the relevant $n-2$-set still generates $A$, as $a_{k_1,k_s}=a_{k_1,k_2}\cdot \ldots \cdot a_{k_{s-1},k_s}$. Thus we get a contradiction with Theorem \ref{th1}.

The reverse implication follows straight from Lemma \ref{lem} for $k=n-1$.
\end{proof}

\begin{cor} \label{min-gen}
There are $n^{n-2}$ minimal sets of generators of $A$.
\end{cor}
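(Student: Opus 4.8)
The plan is to combine Theorem \ref{gen-tree} with Remark \ref{mut-rel} to reduce the counting problem to a purely graph-theoretic one. By Theorem \ref{gen-tree}, the minimal sets of generators of $A$ are exactly the $(n-1)$-subsets $B \subset C_n$ for which $G(B)$ is a tree. By the correspondence in Remark \ref{mut-rel}, each such $B$ corresponds bijectively to a spanning tree on the labeled vertex set $\{1,\dots,n\}$: an $(n-1)$-element edge set that forms a tree must touch all $n$ vertices (a tree on $k$ vertices has exactly $k-1$ edges, so $n-1$ edges forces $n$ vertices), hence is a spanning tree, and conversely every labeled spanning tree is an $(n-1)$-edge set whose graph is a tree. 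So the number of minimal sets of generators equals the number of spanning trees of the complete graph $K_n$.

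The key step is then to invoke Cayley's formula, which states that $K_n$ has exactly $n^{n-2}$ spanning trees. First I would state the reduction to spanning trees of $K_n$ explicitly, noting the bijection is immediate from the two earlier results. Then I would cite Cayley's formula (or, if a self-contained argument is wanted, sketch the Pr\"ufer-sequence bijection: each labeled tree on $n$ vertices corresponds to a unique sequence in $\{1,\dots,n\}^{n-2}$, of which there are $n^{n-2}$). Finally I would conclude that the count is $n^{n-2}$, completing the corollary.

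The main obstacle is essentially bookkeeping rather than mathematics: one must be careful that "$(n-1)$-set whose graph is a tree" and "spanning tree" genuinely coincide, i.e.\ that an $(n-1)$-edge acyclic-or-tree subgraph of $K_n$ cannot be a tree on fewer than $n$ vertices plus some isolated vertices without contradicting the edge count. This is the one place where the edge-vertex relation for trees does the real work, so I would spell it out. Everything else follows from quoting Cayley's formula.

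\begin{proof}
By Theorem \ref{gen-tree}, a $(n-1)$-set $B \subset C_n$ generates $A$ if and only if $G(B)$ is a tree. Using the correspondence of Remark \ref{mut-rel}, $B$ is determined by its edge set in $G(B)$, which has exactly $n-1$ edges on the vertex set $\{1,\dots,n\}$. A tree on $k$ vertices has exactly $k-1$ edges; since $G(B)$ is a tree with $n-1$ edges, it has $n$ vertices, i.e.\ it is a spanning tree of $K_n$. Conversely, every spanning tree of $K_n$ is a connected acyclic subgraph with $n-1$ edges, hence corresponds via Remark \ref{mut-rel} to an $(n-1)$-set $B \subset C_n$ with $G(B)$ a tree, which by Theorem \ref{gen-tree} generates $A$. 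Thus the minimal sets of generators of $A$ are in bijection with the spanning trees of $K_n$. By Cayley's formula, the number of spanning trees of $K_n$ is $n^{n-2}$, which completes the proof.
\end{proof}
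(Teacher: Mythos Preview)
Your proof is correct and follows the same approach as the paper: reduce the count to labeled trees on $n$ vertices via Theorem \ref{gen-tree} (and the correspondence of Remark \ref{mut-rel}), then apply Cayley's formula. The paper's own proof is a one-line invocation of Cayley's formula; your version simply spells out the bijection more carefully.
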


\begin{proof}
This is an immediate consequence of the Cayley's formula for the number of trees on $n$ vertices.
\end{proof}

Although there are many combinations of $n-1$ values generating the entire PC matrix, the values $m_{i,i+1}$ are the most important of all combinations since they express this sequence:

\begin{equation}
E_1/E_2, E_2/E_3,..., E_{n-2}/E_{n-1}, E_{n-1}/E_n \label{PG1}
\end{equation}

Let us call the above $n-1$ values as {\em principal generators} (PGs). For matrix $A_n$, the principal generators are located above the main diagonal and are as follows:

$$a_{1,2},a_{2,3},\ldots,a_{n-1,n}$$

Let us invent our own handicapping. Handicapping, in sports and games, is the practice of assigning advantage through scoring compensation or other advantage given to different contestants to equalize the chances of winning. The same term also applies to the various methods for computing advantages.

Entities 1 and $n$ occurs in (\ref{PG1}) only once. Any other entity, 2 to $n-1$, occurs twice.
Since the highest frequency is 2, it is fair to add 1 to $E_1$ and $E_n$ to compensate for only one occurrence
and count the maximum number of occurrences of an entity to compensate other entities.
We also define the total handicapping as the total of all compensations.

By the {\em frequency} $f(i,B)$ of an entity $i\in \{1,\ldots,n\}$in a set $B\subset C_n$ we understand the cardinality of set $$B_i:=\{a_{jk}\in B:\ j=i \vee k=i\}.$$

Let $O(B)$ denote the set $\{i:\ B_i\neq \emptyset\}$. We define the {\em total handicapping} of the set $B\subset C_n$ as
$$h(B):=\sum_{i\in O(B)}(\max_{j\in O(B)}f(j,B)-f(i,B)).$$

For example, for generators assumed to be the first raw, the total handicapping would be $(n-2)(n-1)$, since $E_1$ occurs $n-1$ times and the rest entities only once, so they need to be handicapped by $n-2$.

\begin{rem}
The frequency of an entity $i$ in the set $B$ is equal to the degree of vertex $i$ in graph $G_B$:
$$f(i,B)=deg_{G_B}(i).$$
\end{rem}

\begin{rem}
$O(B)$ is the set of vertices of $G_B$ with a positive degree.
\end{rem}

\begin{rem}
$h(B)$ counts the sum of differences between the maximal degree of a vertex in $G_B$ and the degrees of the rest of vertices.
\end{rem}

\begin{cor}
$h(B)=0 \Rightarrow$ graph $G_B$ is regular (all its vertices have the same degree).
\end{cor}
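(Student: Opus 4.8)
The final statement is the corollary asserting that $h(B)=0$ implies $G_B$ is regular. My plan is to reduce the claim to the definition of $h(B)$ together with the two preceding remarks, which translate frequencies into vertex degrees.

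First I would unwind the definition: $h(B)=\sum_{i\in O(B)}\bigl(\max_{j\in O(B)}f(j,B)-f(i,B)\bigr)$. By the remark identifying $f(i,B)$ with $\deg_{G_B}(i)$, each summand equals $\Delta - \deg_{G_B}(i)$, where $\Delta:=\max_{j\in O(B)}\deg_{G_B}(j)$ is the maximal degree among vertices of positive degree. Since $\Delta$ is the maximum, every summand $\Delta-\deg_{G_B}(i)$ is nonnegative, so $h(B)$ is a sum of nonnegative terms.

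Next, the key step: a sum of nonnegative terms vanishes only if every term is zero. Hence $h(B)=0$ forces $\deg_{G_B}(i)=\Delta$ for all $i\in O(B)$, i.e.\ all vertices of positive degree share the common degree $\Delta$. Using the remark that $O(B)$ is exactly the set of vertices with positive degree, this says: every vertex of $G_B$ either has degree $0$ or has degree $\Delta$. If $O(B)=V(G_B)$ we are done immediately — the graph is $\Delta$-regular. Otherwise one must address isolated vertices: a graph with some isolated vertices and the rest of a fixed positive degree is not regular in the usual sense. The cleanest fix is to note that a $(n-1)$-set $B$ whose graph is a tree (the relevant case via Theorem \ref{gen-tree}) has no isolated vertices, so $O(B)=V(G_B)$; alternatively one simply remarks that in the regime of interest all $n$ vertices appear, so regularity follows. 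I would state the corollary under the standing assumption that $G_B$ has no isolated vertices (equivalently $O(B)=\{1,\ldots,n\}$), which is automatic for connected graphs and in particular for trees.

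The main obstacle is therefore not mathematical depth but a definitional subtlety: the quantifier "$\max_{j\in O(B)}$" ranges only over vertices of positive degree, so $h(B)=0$ literally yields "regular on its support" rather than "regular." I would handle this by invoking that $B$ is a minimal generating set whose graph is a tree (Theorem \ref{gen-tree}), so every vertex has positive degree and the conclusion is genuine $\Delta$-regularity; no calculation beyond the one-line nonnegativity argument is needed.
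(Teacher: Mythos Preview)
The paper gives no proof of this corollary at all; it is stated as an immediate consequence of the three preceding remarks, and your nonnegativity argument (a sum of nonnegative terms vanishes only if each term vanishes) is exactly the implicit reasoning. So on the core step you match the paper.

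Where you go beyond the paper is in flagging the isolated-vertex issue, and you are right that it is a genuine wrinkle: with $G_B$ defined on all $n$ vertices (Remark~\ref{mut-rel}) and $h(B)$ summing only over $O(B)$, the literal conclusion of $h(B)=0$ is ``all vertices of positive degree have the same degree,'' not regularity of the whole graph (e.g.\ $n=3$, $B=\{a_{12}\}$ gives $h(B)=0$ but degrees $1,1,0$). The paper simply ignores this. Your proposed fix, however, imports the hypothesis that $B$ is an $(n-1)$-set of generators via Theorem~\ref{gen-tree}, which is \emph{not} part of the corollary as stated---so you would be proving a different (stronger-hypothesis) statement than the one written. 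If you want to stay faithful to the paper, the honest route is to note that the corollary should be read as ``$G_B$ is regular on $O(B)$'' (equivalently, assume $O(B)=\{1,\ldots,n\}$), and that the paper is slightly imprecise here.
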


\begin{thm} \label{handi}
If $n>1$ and $B$ is a $(n-1)$-set of generators of $A$, then 
\begin{enumerate}
\item $h(B)=0 \Rightarrow n=2$
\item $h(B)\neq 1$
\item $h(B)=2 \Rightarrow G_B$ is a path connecting all vertices. 
\end{enumerate}
\end{thm}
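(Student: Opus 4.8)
The plan is to translate everything into graph language using the earlier remarks and Theorem~\ref{gen-tree}, and then argue about degree sequences of trees on $n$ vertices. Since $B$ is an $(n-1)$-set of generators, Theorem~\ref{gen-tree} tells us $G_B$ is a tree on $n$ vertices; by the subsequent remarks, $O(B)$ is exactly the vertex set (a tree has no isolated vertices when $n>1$), $f(i,B)=\deg_{G_B}(i)$, and $h(B)=\sum_{i}(\Delta-\deg(i))$ where $\Delta$ is the maximum degree. Also recall the handshake identity $\sum_i \deg(i)=2(n-1)$ for a tree. So throughout I am studying the quantity $n\Delta-2(n-1)$.

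For part~(1), $h(B)=0$ forces every vertex to have degree $\Delta$, i.e.\ the tree is $\Delta$-regular; but a tree on $n>1$ vertices always has at least two leaves, so $\Delta=1$, hence $n\cdot 1=2(n-1)$, giving $n=2$. For part~(2), suppose $h(B)=1$; then exactly one vertex has degree $\Delta-1$ and all others have degree $\Delta$. Summing, $n\Delta-1=2(n-1)$, so $n\Delta=2n-1$, which is impossible since the left side is a multiple of $n$ and $2n-1$ is not (for $n>1$). For part~(3), $h(B)=2$ means either two vertices have degree $\Delta-1$ and the rest degree $\Delta$, or one vertex has degree $\Delta-2$ and the rest degree $\Delta$. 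In the first case $n\Delta-2=2(n-1)$ gives $n\Delta=2n$, so $\Delta=2$; then the tree has every vertex of degree $2$ except two of degree $1$, which is exactly a path through all $n$ vertices. In the second case $n\Delta-2=2(n-1)$ again gives $\Delta=2$, but then the exceptional vertex has degree $0$, contradicting that $G_B$ is a connected tree on $n>1$ vertices (or simply that a tree has no isolated vertex); so this case does not occur, and the path conclusion stands.

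The main thing to be careful about is the bookkeeping of which degree multisets are consistent with a given value of $h(B)$: $h(B)$ only records the total deficiency below the maximum, so for each target value one must enumerate the possible partitions of that deficiency among the vertices and then rule out all but the intended configuration using the parity/divisibility constraint $\sum\deg(i)=2(n-1)$ together with the structural facts that a tree on $n>1$ vertices is connected and has at least two leaves. Once those two ingredients are in hand, each of the three cases is a short arithmetic argument; the only genuinely ``combinatorial'' step is recognizing that a tree with degree sequence $(1,1,2,2,\dots,2)$ is necessarily a path, which is standard.
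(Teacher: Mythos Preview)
Your proof is correct and follows essentially the same route as the paper's: translate via Theorem~\ref{gen-tree} to a tree on $n$ vertices, use the handshake identity $\sum_i\deg(i)=2(n-1)$, and then case-split on the degree multisets compatible with each value of $h(B)$. The only cosmetic difference is in part~(1), where you invoke the ``a tree has at least two leaves'' fact to force $\Delta=1$ before solving, whereas the paper solves $n(2-k)=2$ directly; both arguments are equally short.
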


\begin{proof}
From Theorem \ref{gen-tree}, it follows that $G(B)$ is a tree. Thus, it has $n$ vertices and $n-1$ edges. Notice that to obtain the number of edges one needs to sum up the degrees of vertices and divide by two (each edge is counted twice).

If $h(B)=0$, then $G(B)$ has $n$ vertices of the same degree $k$. Hence,
$$n-1=\frac{nk}{2} \Rightarrow 2n-2=nk \Rightarrow n(2-k)=2 \Rightarrow n=2.$$ 

If $h(B)=1$, then $G(B)$ has $n-1$ vertices of the same degree $k$ and one of degree $k-1$. Hence,
\begin{eqnarray*}n-1=\frac{(n-1)k}{2}+\frac{k+1}{2}=\frac{nk-1}{2} \Rightarrow 2n-2=nk-1  \Rightarrow \\ \Rightarrow (2-k)n=1  \Rightarrow n=1,\end{eqnarray*} 
which is a contradiction.

If $h(B)=2$, then there are two cases. The first one with $n-1$ vertices of degree $k$ and one vertex of degree $k-2$. Hence,
$$n-1=\frac{(n-1)k}{2}+\frac{k-2}{2}=\frac{nk-2}{2} \Rightarrow 2n-2=nk-2  \Rightarrow k=2,$$ and it follows that there is one vertex of degree $0$, so graph $G_B$ is disconnected, which is a contradiction.

For a case with $n-2$ vertices of degree $k$ and two vertices of degree $k-1$, we have:
$$n-1=\frac{(n-2)k}{2}+\frac2{k-1}{2}=\frac{nk-2}{2} \Rightarrow 2n-2=nk-2  \Rightarrow k=2,$$ and it follows that $G_B$ is a path connecting all vertices.

\end{proof}

\begin{cor} \label{min-handi}
For $n \geq 2$ there are $\frac{n!}{2}$ $(n-1)$-sets of generators of $A$ with the minimal total handicapping.
\end{cor}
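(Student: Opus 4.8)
The plan is to combine Theorem~\ref{handi}(3) with Corollary~\ref{min-gen}'s underlying graph-theoretic correspondence (Remark~\ref{mut-rel} together with Theorem~\ref{gen-tree}). First I would observe that, by Theorem~\ref{handi}, among all $(n-1)$-sets of generators the value $h(B)$ can never equal $0$ (for $n\geq 3$) or $1$, so the smallest possible value of the total handicapping is $h(B)=2$; and for $n=2$ the unique generating set has $h(B)=0$, which coincides with $\frac{2!}{2}=1$ set, so the formula holds in that base case. Thus for $n\geq 3$ the problem reduces to counting the $(n-1)$-sets $B$ with $h(B)=2$, and again by Theorem~\ref{handi}(3) these are exactly the sets $B$ whose associated graph $G_B$ is a path (a Hamiltonian path, i.e.\ a path visiting all $n$ vertices).

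Next I would count the labelled paths on $n$ vertices. A path on the vertex set $\{1,\ldots,n\}$ is determined by an ordering of the $n$ vertices along the path, and reversing the ordering gives the same (undirected) path and hence the same edge set, so the same set $B$. Therefore the number of such $B$ is $\frac{n!}{2}$. I would also remark that this is consistent with Corollary~\ref{min-handi}'s statement, and that the correspondence $B\leftrightarrow G_B$ is a bijection between $(n-1)$-subsets of $C_n$ and $n$-vertex graphs with $n-1$ edges, so counting graphs is the same as counting sets $B$.

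I expect the main (modest) obstacle to be purely bookkeeping: making sure the base case $n=2$ is handled correctly, since there Theorem~\ref{handi}(1) gives $h(B)=0$ rather than $2$, yet the count $\frac{n!}{2}=1$ still agrees; and being careful that ``path connecting all vertices'' in Theorem~\ref{handi}(3) is read as a simple path on all $n$ vertices, whose number of distinct labellings is $n!/2$ and not $n!$. No deep argument is needed beyond invoking Theorem~\ref{handi} and this elementary enumeration. I would close by noting that for every $n\geq 2$ the minimal total handicapping is attained (it equals $0$ if $n=2$ and $2$ if $n\geq 3$) and is realised by exactly the $\frac{n!}{2}$ path-generators, completing the proof.
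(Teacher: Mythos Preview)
Your proposal is correct and follows essentially the same route as the paper's own proof: handle $n=2$ separately, then for $n\geq 3$ invoke Theorem~\ref{handi} to identify the minimal-handicapping generators with Hamiltonian paths on $\{1,\ldots,n\}$, and count these as $n!/2$ by pairing each permutation with its reversal. The only point worth tightening is that Theorem~\ref{handi}(3) is stated as a one-way implication ($h(B)=2\Rightarrow G_B$ is a path), so to say ``exactly'' you should also note the easy converse that every Hamiltonian path has two endpoints of degree~$1$ and $n-2$ interior vertices of degree~$2$, hence $h(B)=2$.
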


\begin{proof}
For $n=2$ the statement is obvious. For $n \geq 3$ from Theorem \ref{handi}, we know that it suffices to count all the paths connecting all vertices. Each such path is of the form $p(1)-p(2)-\cdots-p(n)$, where $p$ is a permutation of the set $\{1,\ldots,n\}$. However, from $p(1)-p(2)-\cdots-p(n)$ and $p(n)-p(n-1)-\cdots-p(1)$, we obtain the same tree hence the number of permutations must be divided by two.
\end{proof}

\begin{ex} \label{ex1}
Consider $n=4$ and the matrix
\[ A=\left[
\begin{array}{rrrr}
1 & a & b & c \\
\frac{1}{a} & 1 & d & e \\
\frac{1}{b} & \frac{1}{d} & 1 & f\\
\frac{1}{c} & \frac{1}{e} & \frac{1}{f} & 1 
\end{array}
\right].\]

\noindent The set $C_4=\{a,b,c,d,e,f\}$ consists of $\frac{4^2-4}{2}=6$ elements, so it has \\
 $\left(\begin{array}{c}6\\ 3\end{array}\right)=20$ $3$-subsets listed in Tab. \ref{3sets}. From Remark \ref{mut-rel}, we know that they are related to some graphs. They are listed in Fig.~\ref{fig:shape20} and their nodes are labeled in Fig.~\ref{fig:nodes}.

\begin{table}[h] \label{3sets}
\centering 
\begin{tabular}{|c|c|c|c|}
\hline $\{a,d,f\}$ & $\{a,c,d\}$ & $\{a,c,f\}$ & $\{c,d,f\}$ \\ 
\hline $\{a,e,f\}$ &  $\{a,b,f\}$
& $\{b,c,d\}$ & $\{c,d,e\}$  \\ 
\hline  $\{a,b,e\}$ & $\{b,c,e\}$ & $\{b,e,f\}$ & $\{b,d,e\}$ \\ 
\hline  $\{a,b,c\}$ & $\{a,d,e\}$ & $\{b,d,f\}$ & $\{c,e,f\}$  \\ 
\hline  $\{a,c,e\}$ & $\{a,b,d\}$ & $\{b,c,f\}$ & $\{d,e,f\}$  \\ 
\hline 
\end{tabular}

\caption{$3$-subsets for $n=4$}
\end{table}






\noindent The last four subsets are triads, so they are not generators of $A$. The graphs related to them are cycles.\\

\begin{figure}[h]
\centering
\includegraphics[width=0.5\linewidth]{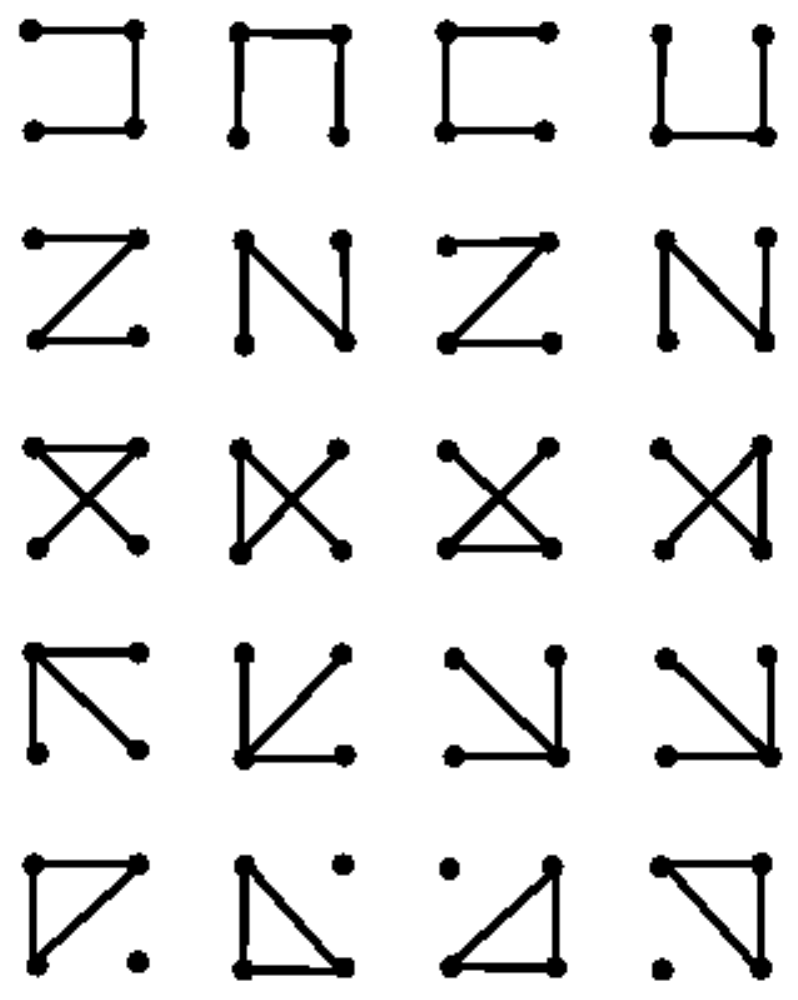}
\caption{All possible combinations of 20 three generators}
\label{fig:shape20}
\end{figure}

\begin{figure}[h]
\centering
\includegraphics[width=0.2\linewidth]{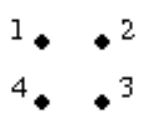}
\caption{Nodes for all generators in Fig.~\ref{fig:shape20}}
\label{fig:nodes}
\end{figure}

According to Corollary \ref{min-gen}, there are $4^{4-2}=16$ minimal sets of generators of $A$ and they are the first $16$. The graphs related to them are obviously trees. From Corollary \ref{min-handi}, we conclude that we have $\frac{4!}{2}=12$ sets of generators of $A$ minimizing the total handicapping and they are the first $12$. The related graphs are paths.

\noindent It is worth noticing that the first subset is the set of PGs.
\end{ex}

The strategic importance of PGs, based on the total handicapping, places them on the diagonal above the main diagonal. It is important to notice that PGs are always there and it is up to us to decide whether or not we use them for the reconstructing the entire consistent PC matrix (with the possible increased error) or enter the remaining entires into PC matrix with the possibility of inconsistency in triads. As pointed out in \cite{KO1998}, nothing is carved in stone and even reciprocity can be relaxed for the blind wine and other types of tasting. Reciprocity may also not be guaranteed when data are collected over the Internet from different sources or individuals.
For PGs sufficiently accurate, the PC matrix reconstruction is a vital possibility. 

\section{An algorithm for the reconstruction of a PC matrix from generators}

Assuming that we know the $(n-1)-$set $B \subset C_n \subset A$, we can reconstruct the entire matrix $A$ using an algorithms described below.

For given PGs $a_{11},\ldots,a_{nn}$, we can compute each $a_{ij}$ for $i<j$  from Remark \ref{rem-aij}. 
On the other hand, if we know a $(n-1)$-set $B$ of different values of matrix $A$ we can apply the $\log$ function to the formula (\ref{aij}) and we substitute $x_k:=\log a_{k,k+1}$ then we get the system of linear equations

$$\log a_{ij}=\sum_{k=i}^{j-1} x_k,\;\;\;\; a_{ij}\in B$$

If $B$ generates $A$, then this system has a unique solution and we may calculate $a_{k,k+1}=10^{x_k}$. \\

\noindent This leads us to the following algorithm: \\

\noindent \underline{\textit{INPUT}}: \\
\begin{itemize}
\item 
$M\in {\mathcal M}(n-1,3)$ representing the set $B$ of generators of $A$. 
\item 
$(M[i,1],M[i,2])$ correspond to  the coordinates of the $i$-th element of $B$.
\item 
$M[i,3]$ corresponds to the value of the $i$-th element of $B$.

\end{itemize}
\noindent \underline{\textit{OUTPUT}}: \\

\noindent The consistent PC matrix $A\in {\mathcal M}(n,n)$ reconstructed from $A$. \\


\noindent \textbf{ALGORITHM}:

\begin{enumerate}
\item Construct graph $G_B$:\\
$G_B:=\emptyset$;\\
for $i:=1$ to $n-1$ $G_B:=G_B \cup \{M[i,1]-M[i,2]\}$;
\item Check if $G_B$ is a tree (using DFS algorithm):\\
if not then write ('$B$ does not generate $A$') and exit;
\item Solve a linear system\\
$$\sum_{k = M[i,1]}^{M[i,2]-1}x_k=\log(M[i,3]),\;\; i=1,\ldots,n-1$$
\item Calculate elements of $A$:\\
for $k:=1$ to $n$ $A[k,k]:=1$;\\
for $k:=1$ to $n-1$ $A[k,k+1]:=10^{x_k}$;\\
for $k:=1$ to $n-1$\\
$\ $\hspace{.5cm} for $l:=k+2$ to $n$ $A[k,l]:=\prod_{m=k}^{l-1}A[m,m+1]$\\
for $k:=1$ to $n-1$\\
$\ $\hspace{.5cm} for $l:=k+1$ to $n$ $A[l,k]:=\frac{1}{A[k,l]}$\\
\end{enumerate}

\begin{ex}

Let us consider the PC matrix $A$, from the example \ref{ex1}, and assume that we know the values from the set $B=\{a,b,e\}$. Thus, the input matrix is:

\[ M=\left[
\begin{array}{rrr}
1 & 2 & a  \\
1 & 3 & b \\
2 & 4 & e
\end{array}
\right].\]

Apply the algorithm:

\begin{enumerate}
\item We obtain the graph $G_B$ with $4$ vertices $$1,\ 2,\ 3,\ 4$$ and $3$ edges $$1-2,\ 1-3,\ 2-4.$$
\item We assure that $G_B$ is a tree and continue.
\item We solve the linear system\\
\[
\left\{
\begin{array}{rcl}
x_1 & = & \log{a}\\
x_1+x_2 & = & \log{b}\\
x_2+x_3 & = & \log{e}
\end{array}
\right.
\]
and we get
\[
\left\{
\begin{array}{rcl}
x_1 & = & \log{a}\\
x_2 & = & \log{b}-\log{a}=\log{\frac{b}{a}}\\
x_3 & = & \log{e}-log{b}+\log{a}=\log{\frac{ae}{b}}
\end{array}
\right.
\]

\item We calculate the missing elements of the output PC matrix $A$:
\[ A=\left[
\begin{array}{rrrr}
1 & a & b & ae \\
\frac{1}{a} & 1 & \frac{b}{a} & e \\
\frac{1}{b} & \frac{a}{b} & 1 & \frac{ae}{b}\\
\frac{1}{ae} & \frac{1}{e} & \frac{b}{ae} & 1 
\end{array}
\right].\]
\end{enumerate}

\end{ex}

\begin{rem}
The complexity of the algorithm is $O(n^3)$.
\end{rem}
\begin{proof}
Steps $1$ and $2$ take $O(n)$ operations. The complexity of step $3$ is $O(n^2)$ and of step $4$ is $O(n^3)$.
\end{proof}

Notice that the assumption that the number $p$ of input entries is equal to $n-1$ may not be satisfied. We may have less or more data and still use the slightly modified algorithm. We only change steps {\textbf{1}} and {\textbf{2}}:

\begin{enumerate}
\item Construct graph $G_B$:\\
$G_B:=\emptyset$;\\
for $i:=1$ to $p$ $G_B:=G_B \cup \{M[i,1]-M[i,2]\}$;
\item Replace $G_B$ by its spanning tree received from DFS algorithm.\\
if $G_B$ has less than $n-1$ edges then write ('$B$ does not generate $A$') and exit;
\end{enumerate}

\section{Conclusions}

In this study, we have demonstrated that we are able to use less than all $n\cdot (n-1)/2$ pairwise comparisons to reconstruct a consistent matrix. In fact, the minimum of $n-1$ generators is sufficient. This may be very useful, provided that they represent more accurate values than the rest of the PC matrix. Another possible application of the suggested algorithm is a case of the missing data in the PC matrix. The entire PC can be reconstructed from the principal generators located above the main diagonal. Together with the preliminary sorting of entities, it is quick and easy way of getting weights.

The problem is that the accumulated errors grow fast when we ``walk away'' from the main diagonal towards the upper right corner. In fact, for $n=7$ and 20\% error in the principal generators, nearly 200\% is accumulated in the upper right corner.

The proposed algorithm for reconstructing the entire matrix from the principal generators is easy to implement, even in Gnumeric (or in MS Excel). In addition, the reconstructed values (accurate or not) are consistent. 

In this study, we only consider the multiplicative variant of PC, which is based on \textit{\textit{``how many times?''}}, while the additive version of pairwise comparisons 
(``by how much?'')was recently analyzed in \cite{KKFY2012}. The additive PC method utilizes a different type of inconsistency (not addressed here). Certainly more research is needed.

\end{document}